\documentclass[conference,a4paper]{IEEEtran}

\usepackage{calc}
\usepackage{graphicx}
\usepackage{color}
\usepackage{url}
\usepackage{amsfonts}
\usepackage{amssymb}
\usepackage{graphicx}
\usepackage{epsfig}
\usepackage{bm} 
\usepackage{bbm} 
\usepackage{epstopdf} 
\usepackage{enumerate}






\newcommand{\beq}[1]{\begin{equation}\label{#1}}
\newcommand{\eeq}{\end{equation}}

\newcommand{\beqn}[1]{\begin{eqnarray}\label{#1}}
\newcommand{\eeqn}{\end{eqnarray}}



\newtheorem{thmbody}{Theorem}
\newenvironment{thm}{
\begin{thmbody}
	}{
	\end{thmbody} 
	}
\newtheorem{dfnbody}{Definition}

\newtheorem{corbody}{Corollary}

\newtheorem{lemmabody}{Lemma}
\newenvironment{lemma}{
\begin{lemmabody}
	}{
	\end{lemmabody} 
	}
\newtheorem{conjbody}{Conjecture}

\newtheorem{propbody}{Proposition}

\newenvironment{proof}{
	{\it Proof:}
	}{
 $\Box$
	}






%
%
%
%
%
%

%
%
%
%
%
%
%
%
%
%
%
%
\usepackage{dsfont}

\addtolength{\topmargin}{9mm}

\usepackage[utf8]{inputenc} 
\usepackage[T1]{fontenc}
\usepackage{url}
\usepackage{ifthen}
\usepackage{cite}
\usepackage{amssymb}
\usepackage[cmex10]{amsmath} 
\usepackage{enumitem}
\usepackage{cancel}
\usepackage{tikz}
\usepackage{tkz-graph}
\usepackage{comment}

\DeclareMathOperator{\tr}{Tr}
\DeclareMathOperator{\Ev}{E}
\newcommand{\R}{\mathbb{R}}
\newcommand{\C}{\mathbb{C}}

\interdisplaylinepenalty=2500 

\hyphenation{op-tical net-works semi-conduc-tor}
\begin{document}
	\title{Frame Moments and Welch Bound with Erasures } 
	
	\author{%
		\IEEEauthorblockN{Marina Haikin}
		\IEEEauthorblockA{EE - Systems Department\\Tel Aviv University\\
			Tel Aviv, Israel\\
			Email: mkokotov@gmail.com}
		\and
		\IEEEauthorblockN{Ram Zamir}
		\IEEEauthorblockA{EE - Systems Department\\Tel Aviv University\\
			Tel Aviv, Israel\\
			Email: zamir@eng.tau.ac.il}
		\and
		\IEEEauthorblockN{Matan Gavish}
		\IEEEauthorblockA{School of Computer Science and Engineering\\
			The Hebrew University\\ 
			Jerusalem, Israel\\
			Email: gavish@cs.huji.ac.il}
	}

	\maketitle
	
	\begin{abstract}
		The Welch Bound is a lower bound on the root mean square cross correlation between $n$ unit-norm vectors $f_1,...,f_n$ in the $m$ dimensional space ($\R ^m$ or $\C ^m$), for $n\geq m$.
		Letting $F = [f_1|...|f_n]$ denote the $m$-by-$n$ frame matrix, the Welch bound can be viewed as a lower bound on the second moment of $F$,
		namely on the trace of the squared Gram matrix $(F'F)^2$.
		We consider an erasure setting, in which a reduced frame, composed of a random subset of Bernoulli selected vectors, is of interest.
		We extend the Welch bound to this setting and present the
		{\em erasure Welch bound} on the expected value of the Gram matrix of the reduced frame.
		Interestingly, this bound generalizes to the $d$-th order moment of $F$.
		We provide simple, explicit formulae for the generalized bound for $d=2,3,4$, which is the sum of the $d$-th moment of Wachter’s classical MANOVA distribution
		and a vanishing term (as $n$ goes to infinity with $\frac{m}{n}$ held constant).
		The bound holds with equality if (and for $d = 4$
		only if) $F$ is an Equiangular Tight Frame (ETF).
		Our results offer a novel perspective on the superiority of ETFs over other
		frames in a variety of applications, including spread spectrum
		communications, compressed sensing and analog coding.
	\end{abstract}
	
	\begin{IEEEkeywords}
		Welch bound, equiangular tight frames, MANOVA distribution, analog coding, random matrix theory.
	\end{IEEEkeywords}
	
	\section{Introduction}
	Design of frames or over-complete bases with favorable properties is a thouroughly studied subject in communication, signal processing and harmonic analysis.  In various applications, one is interested in finding over-complete bases
	where the favorable properties hold for a random subset of the frame vectors, rather than for the entire frame.
	
	Here are a few examples. In code-devision multiple access (CDMA), spreading sequences with low cross-correlation are preferred; when only a random subset of the users is active, the quantity of interest is the expected cross-correlation within a random subset of the spreading sequences \cite{rupf1994optimum}.
	In sparse signal reconstruction from undersampled measurements, the ability to reconstruct the signal crucially depends on 
	properties of a subset of the measurement matrix, which corresponds to the non-zero entries of the sparse signal;
	for example, if the extreme eigenvalues of the submatrix are bounded, stable recovery is guaranteed \cite{candes2008restricted}. When the support of the sparse vector is random, one is interested in extreme eigenvalues of a random frame subset \cite{calderbank2010construction}.
	In analog coding, various schemes of interest require frames, for which the first inverse moment of the 
	covariance matrix of a randomly chosen frame subset is as small as possible. This occurs, for example, 
	in the presence of source erasures at the encoder \cite{haikin2016analog},
	in channels with impulses \cite{wolf1983redundancy} or with erasures \cite{ITA17}
	and in multiple description source coding \cite{mashiach2013sampling}.
	
	A famous result by Welch \cite{welch1974lower} provides a universal lower bound on the mean and maximum value 
	of powers of absolute values of inner products (a.k.a cross-correlations) of frame vectors. 
	Frames which achieve the Welch lower bound on maximal absolute cross-correlation
	are known as equiangular tight frame (ETF). 
	
	Motivated by frame design for various applications, in this paper we show that the Welch bound naturally extends to random frame subsets, such that 
	the lower bound is achieved by (and sometimes only by) ETFs. We term this new 
	universal lower bound the
	{\em Erasure Welch Bound} (EWB) and generalize it to higher-order covariances as well.
	
	As a universal, tight lower bound in frame theory, the EWB is essentially a geometric quantity. Surpringly, 
	the EWB itself coincides with a quantity appearing elsewhere in mathematics, namely in random matrix theory.
	Below, we prove that the EWB matches the moments of Wachter's classical limiting MANOVA distribution \cite{wachter1980limiting}. 
	In a recent paper \cite{haikin2017random} we reported overwhelming empirical evidence
	that the covariance matrix of a random frame subset from many well-known ETFs in fact follows the 
	Wachter's classical limiting MANOVA distribution. To the best of our knowledge,
	the results of this paper are the first theoretical confirmations to the empirical predictions of \cite{haikin2017random}, relating ETFs to Wachter's classical limiting MANOVA distribution and random matrix phenomena.
	
	\section{Notation and Setup} \label{definitions}
	We consider a unit-norm frame, being an over-complete basis comprising $n$ elements - unit-norm vectors $f_1,\dots,f_n$.
	Let $F=\{F_{j,i}\}$ denote the $m$-by-$n$ frame matrix whose columns are the frame vectors, $F=\left[f_1|\cdots|f_n\right]$.
	Let us define the vector cross correlation:
	\begin{equation} \label{corr}
	c_{i_1,i_2}\triangleq<f_{i_1},f_{i_2}> = f_{i_1}'f_{i_2}=\sum_{j=1}^{m}F_{j,i_1}^*F_{j,i_2}
	\end{equation}
	where 
	\begin{equation} \label{c_ii}
	c_{i,i} = \|f_i\|^2=1
		\end{equation}
		by the unit norm property.
	The {\em Welch bound} \cite{welch1974lower} lower bounds the root-mean-square (rms) absolute cross correlation:
	\begin{equation} \label{rms WB}
	I^2_{rms}(F) \triangleq \frac{1}{n(n-1)}\sum_{i_1=1}^{n}\sum_{i_2\neq i_1}^{n}|c_{i_1,i_2}|^2\ge \frac{n-m}{(n-1)m},
	\end{equation}
	and it is achieved with equality iff $F$ is a Uniform Tight Frame (UTF), i.e.
	\begin{equation} \label{UTF}
	FF'=\frac{n}{m}I_m.
	\end{equation}
	The Welch bound \cite{welch1974lower} implies a bound on the maximum absolute cross correlation:
	\begin{equation} \label{max WB}
	I^2_{max}(F) \triangleq \max_{1\le i_1< i_2\le n}|c_{i_1,i_2}|^2\ge \frac{n-m}{(n-1)m}.
	\end{equation}
	This stronger lower bound is achieved with equality iff the frame is an Equiangular Tight Frame (ETF), namely, it is UTF \eqref{UTF} and satisfies 
	\begin{equation} \label{ETF2}
	|c_{i_1,i_2}|^2={\rm constant}=\frac{n-m}{(n-1)m} \ \ \forall i_1 \neq i_2.
	\end{equation}
	This unique configuration, which exists only for some dimensions $m$ and number of vectors $n$, achieves a whole family of lower bounds which are derived below.
	
	Our main object of interest is a submatrix composed of a random subset of the frame vectors, or columns of $F$.
	Define the following $m$-by-$n$ matrix
	\begin{equation} \label{X}
	X = FP,
	\end{equation}
	where $P$ is a diagonal matrix with independent Bernoulli($p$) elements on the diagonal.
	In other words, each of the vectors $f_1,...,f_n$ is replaced by a zero vector with probability $1-p$.
	The empirical moment,
		\begin{equation} \label{moment d}
		\frac{1}{n}\tr \left((X'X)^d\right)
		\end{equation}
	is the $d$-th moment of the empirical eigenvalues distribution of $X'X$. 
	We define the expected $d$-th moment of a random subset of $F$ as:
	\begin{equation} \label{moment d}
	m_d \triangleq \frac{1}{n}\Ev\left[ \tr \left((X'X)^d\right)\right]=\frac{1}{n}\Ev\left[ \tr \left((FPF')^d\right)\right]
	\end{equation}
	where we applied $\tr \left((X'X)^d\right)=\tr \left((XX')^d\right)$ and $P^2=P$.

	The first moment ($d=1$) of a frame is constant since
	\begin{equation} \label{m1}
	\begin{split}
	&m_1 = \frac{1}{n}\Ev\left[ \tr \left(X'X\right)\right] = \frac{1}{n}\Ev\left[ \sum_{i=1}^{n}f_i'f_iP_{i,i}\right]\\&=\frac{1}{n}\Ev\left[ \sum_{i=1}^{n}P_{i,i}\right]=\frac{1}{n}\sum_{i=1}^{n}\Ev\left[ P_{i,i}\right]=\frac{1}{n}\sum_{i=1}^{n}p = p
	\end{split}
	\end{equation}
	where the third equality is due to \eqref{c_ii}.
	A useful result for attaining bounds for $d>1$ is the special case of $p=1$, i.e. a bound on the moments of the whole frame without taking subsets.  
	\begin{lemma} \label{lemma1}
		For any unit-norm frame,
		\begin{equation} \label{md bound_p=1}
		\frac{1}{n}\tr \left((FF')^d \right)\ge\left(\frac{n}{m}\right)^{d-1} 
		\end{equation}
		with equality iff $F$ is a UTF.
	\end{lemma}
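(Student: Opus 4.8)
The plan is to reduce the claim to a statement about the eigenvalues of $FF'$ together with the trace constraint coming from the unit-norm property. Write $\lambda_1,\dots,\lambda_m$ for the eigenvalues of the $m$-by-$m$ positive semidefinite matrix $FF'$. Then $\tr\!\left((FF')^d\right)=\sum_{k=1}^m \lambda_k^d$, and since the $\lambda_k$ are simply the nonzero eigenvalues of the Gram matrix $F'F$ whose diagonal is all-ones by \eqref{c_ii}, we have $\sum_{k=1}^m \lambda_k = \tr(FF') = \tr(F'F) = n$. So the left-hand side of \eqref{md bound_p=1} is $\frac{1}{n}\sum_{k=1}^m \lambda_k^d$ subject only to $\lambda_k\ge 0$ and $\sum_k \lambda_k = n$.

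The key step is then a convexity/power-mean argument: the function $t\mapsto t^d$ is convex on $[0,\infty)$ for $d\ge 1$, so by Jensen's inequality applied to the uniform average over $k=1,\dots,m$,
\[
\frac{1}{m}\sum_{k=1}^m \lambda_k^d \;\ge\; \left(\frac{1}{m}\sum_{k=1}^m \lambda_k\right)^{\!d} \;=\; \left(\frac{n}{m}\right)^{\!d}.
\]
Multiplying by $m/n$ gives $\frac{1}{n}\sum_{k=1}^m \lambda_k^d \ge \frac{m}{n}\left(\frac{n}{m}\right)^d = \left(\frac{n}{m}\right)^{d-1}$, which is exactly \eqref{md bound_p=1}. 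For strictness of Jensen when $d>1$: equality holds iff all $\lambda_k$ are equal, i.e. $\lambda_k = n/m$ for every $k$, which says $FF' = \frac{n}{m}I_m$, i.e. $F$ is a UTF by \eqref{UTF}. Conversely if $F$ is a UTF the eigenvalues are all $n/m$ and both sides equal $\left(\frac{n}{m}\right)^{d-1}$; this also covers the (degenerate) case $d=1$, where the inequality is an equality for every frame since $m_1$-type considerations force $\frac1n\tr(FF')=1=(n/m)^0$.

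The only genuinely delicate point is the equality characterization: one must note that $FF'$ having all eigenvalues equal to a common value $c$ forces $c=n/m$ (from the trace), and that no other eigenvalue configuration achieves equality in Jensen for a strictly convex $t^d$. I would also remark that $F$ being a frame (spanning $\R^m$ or $\C^m$) guarantees $FF'$ is positive \emph{definite}, so all $m$ eigenvalues are strictly positive and the count of terms in the average is exactly $m$, which is what makes the constant $(n/m)^{d-1}$ come out cleanly. Everything else is routine; there is no real obstacle beyond carefully invoking the strict convexity of $t\mapsto t^d$ on $[0,\infty)$ for $d>1$.
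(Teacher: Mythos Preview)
Your proof is correct and follows essentially the same approach as the paper: both arguments compute $\tr(FF')=\tr(F'F)=n$ from the unit-norm property, apply Jensen's inequality to the convex map $t\mapsto t^d$ on the eigenvalues of $FF'$, and identify the equality case with all eigenvalues equal to $n/m$, i.e.\ $F$ a UTF. Your additional remarks on the $d=1$ case and on positive definiteness are fine but not needed for the lemma as stated.
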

	\begin{proof}
		The trace of the square matrix $FF'$ is equal to the sum of its eigenvalues $\{\lambda\}_{j=1}^m$. Furthermore, the eigenvalues of $(FF')^d$ are $\{\lambda^d\}_{j=1}^m$. 
		Using Jensen's inequality for a convex function of $\{\lambda\}_{j=1}^m$:
		\begin{equation} \label{Jensen}
		\frac{1}{m}\sum_{j=1}^{m}\lambda^d_{j}\ge \left( \frac{1}{m}\sum_{j=1}^{m}\lambda_{j}\right)^d
		\end{equation}
		with equality iff all eigenvalues are equal, i.e. $FF'\propto I_m$. Hence,
		\begin{equation} \label{Jensen tr}
		\Rightarrow \frac{1}{m}\tr \left((FF')^d \right)\ge \left(\frac{1}{m}\tr(FF')\right)^d 
		\end{equation}
		\begin{equation} \label{trace FF'}
		\frac{1}{m}\tr(FF')=\frac{1}{m}\tr(F'F)=\frac{1}{m}\sum_{i=1}^{n}c_{i,i}=\frac{n}{m}
		\end{equation}
		From \eqref{Jensen tr}, \eqref{trace FF'} and by proper normalization, \eqref{md bound_p=1} follows, with equality iff \eqref{UTF} is satisfied, i.e. $F$ is UTF.
	\end{proof}
	
	Note that for $d=2$, 
	\begin{equation} \label{Lemma d=2}
	\frac{1}{n}\tr \left((FF')^2\right)=\frac{1}{n}\sum_{i_1,i_2=1}^{n}|c_{i_1,i_2}|^2=1+\frac{1}{n}\sum_{i_2\neq i_1}^{n}|c_{i_1,i_2}|^2,
	\nonumber
	\end{equation}
	so \eqref{md bound_p=1} becomes 
	\begin{equation} \label{WB x}
	\frac{1}{n}\sum_{i_1}^{n}\sum_{i_2\neq i_1}^{n}|c_{i_1,i_2}|^2\ge \frac{n}{m}-1\triangleq x
	\end{equation}
	which is the Welch bound \eqref{rms WB}.
	Therefore, a lower bound on $m_d$ in \eqref{moment d} generalizes the Welch bound in two senses. First as a bound on a random subsets of $F$ (where for $p=1$, it reduces to the rms Welch bound). Second, as a bound on higher orders of moments, for $d\ge 2$ \footnote{Our definition is different than that of the Welch bound on the powers of the absolute cross-correlations in \cite{welch1974lower}.}.
	
	\section{Main Result}\label{main}
	To state our main theorem, let us define the $d$-th moment of the MANOVA$(\gamma,p)$ density as, \cite{dubbs2015infinite}
	\begin{equation} \label{moment d MANOVA}
	m^{\rm MANOVA}(\gamma,p,d)\triangleq \min (p,\gamma)\int t^d \,\rho_{p,\gamma}(t)dt
	\end{equation}
	where $\gamma = \frac{m}{n}$ is the aspect ratio of the frame,
	$\min (p,\gamma)$ is due to normalization by full dimension $n$, 
	and
	\begin{equation}	\label{ManovaDensity}
	\begin{split}
	&\rho_{p,\gamma}(t)=\frac{\gamma \sqrt{(t-r_-)(r_+-t)}}{2\pi t(1-\gamma t)\min (p,\gamma)}\cdot 
	I_{(r_-,r_+)}(t) \\&+\left(p+\gamma-1\right)^+/\min (p,\gamma)\cdot \delta(t-\frac{1}{\gamma})
	\end{split}
	\end{equation}
	is Wachter's classical MANOVA desnity \cite{wachter1980limiting}, compactly supported on $[r_-,r_+]$  with
	\begin{equation}
	\label{ManovaDensityExtrimalValues}
	r_\pm=\bigg(\sqrt{\frac{p}{\gamma}(1-\gamma)}\pm\sqrt{1-p}\bigg)^2\,.
	\end{equation} 
	Using $x=\frac{1}{\gamma}-1$ \eqref{WB x}, let:
	\begin{equation} \label{delta EWBd}
	\Delta(\gamma,p,d,n)\triangleq
	\begin{cases}
	0,& d=2,3  \\
	p^2(1-p)^2\frac{x^2}{n-1},& d=4  \\
	\end{cases}.
	\end{equation}
	\begin{thm}[Erasure Welch Bound of order $d$] \label{th1}
		For any  $m$-by-$n$ unit-norm frame and $d=2,3,4$, the $d$-th moment \eqref{moment d} is lower bounded by
		\begin{equation} \label{moment d bound}
		m_d  \ge 
		m^{\rm MANOVA}(\gamma,p,d)+\Delta(\gamma,p,d,n).
		\end{equation}
		with equality for $d=2,3$ iff $F$ is a UTF, and for $d=4$ iff $F$ is an ETF.
	\end{thm}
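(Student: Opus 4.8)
The plan is to reduce \eqref{moment d bound} to the two scalar facts already available --- the Welch bound \eqref{WB x} and Lemma~\ref{lemma1} --- by expanding the expected moment combinatorially and estimating it term by term.

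\textit{Step 1: expand.} Put $G=F'F$, so that $G_{i,i}=1$ and, using $P^{2}=P$ together with cyclicity of the trace, $m_d=\tfrac1n\Ev\big[\tr\big((GP)^{d}\big)\big]=\tfrac1n\sum_{a_1,\dots,a_d}\big(\prod_{k=1}^{d}c_{a_k,a_{k+1}}\big)\Ev\big[\prod_{k=1}^{d}P_{a_k,a_k}\big]$, with the convention $a_{d+1}:=a_1$. Since the $P_{i,i}$ are i.i.d.\ Bernoulli$(p)$ and idempotent, $\Ev[\prod_k P_{a_k,a_k}]=p^{\ell}$, where $\ell$ is the number of distinct values among $a_1,\dots,a_d$. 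Grouping the index tuples by the partition of $\{1,\dots,d\}$ recording which indices coincide turns $m_d$ into a polynomial in $p$ whose coefficients are explicit ``configuration sums'' of products of cross-correlations around closed walks on the index set. For $d=2,3,4$ there are, respectively, $2,5,15$ partitions, and carrying out the count gives $m_2=p+\tfrac{p^{2}}{n}S_2$, $m_3=p+\tfrac{3p^{2}}{n}S_2+\tfrac{p^{3}}{n}R$, and a six-term expression for $m_4$, where $S_2:=\sum_{i_1\ne i_2}|c_{i_1,i_2}|^{2}$, $R:=\sum_{i_1,i_2,i_3\text{ distinct}}c_{i_1,i_2}c_{i_2,i_3}c_{i_3,i_1}$, and the $d=4$ formula involves in addition the $4$-cycle sum $\sum_{\text{distinct}}c_{i_1,i_2}c_{i_2,i_3}c_{i_3,i_4}c_{i_4,i_1}$, the quantity $\|C\|_4^4:=\sum_{i_1\ne i_2}|c_{i_1,i_2}|^{4}$, and $Q:=\sum_{i_1}\big(\sum_{i_2\ne i_1}|c_{i_1,i_2}|^{2}\big)^{2}$.

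\textit{Step 2: re-organize.} Applying the same partition bookkeeping to $\tr(G^{k})$ shows that every ``pure cycle'' configuration sum equals $\tr(G^{k})$ minus lower-order configuration sums; inverting these identities rewrites the expansion as $m_d=c_d(p)+\sum_{k=2}^{d}\binom{d}{k}p^{k}(1-p)^{d-k}\,\tfrac1n\tr(G^{k})+(\text{residual non-trace sums})$, where $c_d(p)$ is a frame-independent polynomial and the residual terms appear only for $d=4$: namely $\|C\|_4^4$ with coefficient $\tfrac{p^{2}(1-p)^{2}}{n}$ and $Q$ with coefficient $\tfrac{2p^{3}(1-p)}{n}$. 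The crux --- and the step I expect to be the main obstacle, both to carry out without slips in the $15$-partition count and because the phenomenon is not evident before doing the computation --- is that after this rearrangement every coefficient of a trace term, $\binom{d}{k}p^{k}(1-p)^{d-k}$, and every residual coefficient is nonnegative on $[0,1]$; only then may the summands be bounded independently.

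\textit{Step 3: bound and assemble.} Lemma~\ref{lemma1} gives $\tfrac1n\tr(G^{k})\ge\gamma^{-(k-1)}$, with equality iff $F$ is a UTF. For the residuals, Cauchy--Schwarz over the $n(n-1)$ off-diagonal entries followed by \eqref{WB x} yields $\|C\|_4^4\ge S_2^{2}/\big(n(n-1)\big)\ge nx^{2}/(n-1)$, with equality iff $F$ is both equiangular and a UTF, i.e.\ an ETF; likewise $Q\ge S_2^{2}/n\ge nx^{2}$, with equality iff $F$ is a UTF (a UTF having all row sums $\sum_{i_2\ne i_1}|c_{i_1,i_2}|^{2}$ equal). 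Substituting all these bounds, the trace terms and the $Q$-term produce $O(1)$ contributions whose total one verifies --- using the closed-form moments of Wachter's MANOVA$(\gamma,p)$ law \cite{dubbs2015infinite} and the identity $\gamma^{-1}=x+1$ --- to equal $m^{\rm MANOVA}(\gamma,p,d)$, while the unique $O(1/n)$ contribution, present only for $d=4$, comes from the Cauchy--Schwarz step on $\|C\|_4^4$ and equals $\tfrac{p^{2}(1-p)^{2}}{n}\cdot\tfrac{nx^{2}}{n-1}=\Delta(\gamma,p,4,n)$; this proves \eqref{moment d bound}. Finally, for $0<p<1$ each coefficient used above is strictly positive, so equality in \eqref{moment d bound} forces every scalar inequality invoked to be tight; for $d=2,3$ this occurs exactly for UTFs, whereas for $d=4$ the Cauchy--Schwarz step additionally forces equiangularity, so equality holds exactly for ETFs.
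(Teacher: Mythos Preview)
Your proposal is correct and follows essentially the same route as the paper: the same partition expansion of $m_d$, the same trace bound (Lemma~\ref{lemma1}) on the cycle sums, and the same Jensen/Cauchy--Schwarz estimates on the two ``residual'' quantities $\|C\|_4^4$ and $Q$ that appear for $d=4$. The paper carries this out by sequential substitution --- set $p=1$, invoke Lemma~\ref{lemma1} to bound $a_{d,d}$ in terms of lower $a_{d,k}$, plug back in, repeat --- whereas you first rewrite each $a_{d,k}$ in terms of $\tr(G^{j})$ and then bound all trace terms at once; algebraically these are the same manipulations performed in a different order.

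The one genuinely new observation in your write-up is the closed form $\binom{d}{k}p^{k}(1-p)^{d-k}$ for the coefficient of $\tfrac1n\tr(G^{k})$ after reorganization. The paper checks the needed nonnegativity ad hoc (e.g.\ ``since $p\le 1$, $p^{2}-p^{3}\ge 0$''), while your binomial identity makes it transparent for all $k\le d$ simultaneously and suggests how the argument should extend to larger $d$. This is a presentational improvement rather than a different proof; the equality analyses also coincide, since UTF already forces all the $C_{i_1}$ to be equal (so your $Q$-equality condition is subsumed), and the Cauchy--Schwarz step on $\|C\|_4^4$ is exactly the paper's Jensen step \eqref{a42_3_bound}.
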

	~\\
	The Erasure Welch Bound admits a simple closed form.
	We can write the first term in \eqref{moment d bound} for $d=2,3,4$ as
	\begin{align} \label{Manova moments}	
	m^{\rm MANOVA}(\gamma,p,2) &= p+p^2x\\
	\nonumber
	m^{\rm MANOVA}(\gamma,p,3) &= p+p^23x+p^3(x^2-x)\\
	\nonumber
	m^{\rm MANOVA}(\gamma,p,4) &= p+p^26x+p^3(6x^2-4x)
	\nonumber
	\\&+p^4(x^3-3x^2+x)
	\nonumber
	\end{align}
	where x is defined in \eqref{WB x}.
	As for the second term, note that $\Delta(\gamma,p,4,n)\to 0$ as $n\to \infty$. Therefore, the lower bound is asymptotically $m^{\rm MANOVA}(\gamma,p,d)$ for $d=2,3$ and $4$.
	This is in line with the empirical results in \cite{haikin2017random}, where we showed that random subsets of ETFs have MANOVA spectra.
	
	We can see from \eqref{delta EWBd} that $\Delta(\gamma,p=1,d,n)=0$, and from \eqref{Manova moments} that $m^{\rm MANOVA}(\gamma,p=1,d)=(x+1)^{d-1}$. Thus for $p=1$ the bound \eqref{moment d bound} becomes $\left(\frac{n}{m}\right)^{d-1}$ and coincides with Lemma \ref{lemma1}.
	
	For $d=2$, the bound of Theorem \ref{th1} strengthens the Welch bound in the following sense. Let $k=pn$, and in contrast to our general setting of constant aspect ratio $\gamma$, in this discussion $k$ and $m$ are held constant (subset's size and dimension). Let us consider the expected average cross correlation \eqref{rms WB} of a random subset from $F$,
	\begin{equation} \label{rms FP}
	I^2_{rms}(FP) = \frac{1}{(k-1)k}\Ev \left[{\sum_{i_1,i2 \in S} |c_{i_1,i_2}|^2 }\right]
	\end{equation}
	where $S \in {1,...,n}$ is the subset of selected indices
	(the $i$'s for which $P_{i,i}$ =1).
	Note that we normalize by the expected subset size ($k$).
	In view of the definition of the second moment $m_2$ in \eqref{moment d},
	\begin{equation} \label{rms FP}
	I^2_{rms}(FP) = \frac{\frac{m_2}{p} - 1}{k-1}  \ge \frac{\frac{k}{m} - \frac{k}{n}}{k-1},
	\end{equation}
	where the lower bound follows from Theorem \ref{th1}.
	Note that the Welch bound \eqref{rms WB} corresponds to the case $n=k$ and equals to $\frac{\frac{k}{m} - 1}{k-1}$
	,
	while for $n>k$ the lower bound above {\em increases}, and goes to $\frac{k}{(k-1)m}$ in the limit as $n\to \infty$.  Thus, the new bound accounts the penalty in the rms cross correlation due to randomly choosing the vectors from a fixed larger set of vectors \cite{rupf1994optimum}.
	As $n\to \infty$ $(p\to0)$, this bound amounts to choosing the $k$ vectors uniformly over a unit sphere.
	
	Another interesting point of view is provided by
	random matrix theory. The penalty of the erasure Welch bound corresponds to the increase in the MANOVA second moment $m^{MANOVA}(\gamma,p,2)$, as $p$ varies from 1 to zero. And in the limit as $p\to 0$, this becomes the second moment of the Mar\u cenko-Pastur distribution of an i.i.d matrix \cite{tulino2004random}.
	\section{Proof of Theorem \ref{th1}}
		We show by induction that 
		\begin{equation} \label{XX'k}
		\begin{split}
		&\left((XX')^k\right)_{j_1,j_{k+1}} = \sum_{j_2,\dots,j_k}^{m}\sum_{i_1,\dots,i_k}^{n}F_{j_1,i_1}P_{i_1,i_1}F'_{i_1,j_2}\cdot\\&
		F_{j_2,i_2}P_{i_2,i_2}F'_{i_2,j_3}\cdots F_{j_k,i_k}P_{i_k,i_k}F'_{i_k,j_{k+1}} \,\,;\,\, \text{for}\,\, k=1,2,\dots
		\end{split}
		\end{equation}
		The induction basis ($k=1$) trivially holds:
		\begin{equation} \label{XX'1}
		\begin{split}
		&\left(XX'\right)_{j_1,j_2} = \sum_{i_1=1}^{n}F_{j_1,i_1}P_{i_1,i_1}F'_{i_1,j_2}
		\end{split}
		\end{equation}
		For the induction step, let us assume that \eqref{XX'k} holds for $k=d$, and show for $k=d+1$,
		\begin{equation} \label{XX'd+1}
		\begin{split}
		&\left((XX')^{d+1}\right)_{j_1,j_{d+2}} = \sum_{j_{d+1}}^{m}\left((XX')^d\right)_{j_1,j_{d+1}}\left(XX'\right)_{j_{d+1},j_{d+2}}\\&= \sum_{j_2,\dots,j_{d+1}}^{m}\sum_{i_1,\dots,i_{d+1}}^{n}F_{j_1,i_1}P_{i_1,i_1}F'_{i_1,j_2}\cdot\\&
		F_{j_2,i_2}P_{i_2,i_2}F'_{i_2,j_3}\cdots F_{j_{d+1},i_{d+1}}P_{i_{d+1},i_{d+1}}F'_{i_k,j_{d+2}}
		\end{split}
		\end{equation}
		From \eqref{XX'k} it follows that
		\begin{equation} \label{tr XX'k}
		\begin{split}
		&\tr\left((XX')^k\right) = \sum_{j_1}\left((XX')^k\right)_{j_1,j_1}\\&=\sum_{j_1,\dots,j_k}^{m}\sum_{i_1,\dots,i_k}^{n}F_{j_1,i_1}P_{i_1,i_1}F'_{i_1,j_2}\cdot\\&
		F_{j_2,i_2}P_{i_2,i_2}F'_{i_2,j_3}\cdots F_{j_k,i_k}P_{i_k,i_k}F'_{i_k,j_1}\\&=\sum_{j_1,\dots,j_k}^{m}\sum_{i_1,\dots,i_k}^{n}F_{j_1,i_1}F^*_{j_2,i_1}\cdot\\&
		F_{j_2,i_2}F^*_{j_3,i_2}\cdots F_{j_k,i_k}F^*_{j_1,i_k}P_{i_1,i_1}P_{i_2,i_2}\cdots P_{i_k,i_k}.
		\end{split}
		\end{equation}
		For the $d$-th order, we can sum over $j_1,\dots,j_d$ (row indices) and use \eqref{corr}, to obtain the following chain of correlations:
		\begin{equation} \label{tr XX'd}
		\begin{split}
		&\frac{1}{n}\tr\left((XX')^d\right)
		=\frac{1}{n}
		\sum_{i_1,\dots,i_d}^{n}c_{i_1,i_2}c_{i_2,i_3}\cdots c_{i_d,i_1}P_{i_1,i_1}\cdots P_{i_d,i_d}
		\end{split}
		\end{equation}
		In order to take the expectation we break the sum into cases according to possible combinations of distinct or equal indices. When the number of distinct values in $i_1,\dots,i_d$ is $k$, $\Ev \left[P_{i_2,i_2}\cdots P_{i_d,i_d}\right]=p^k$. The sum of $\frac{1}{n}c_{i_1,i_2}c_{i_2,i_3}\cdots c_{i_d,i_1}$ over all such combinations is denoted by $a_{d,k}(F)$.
		Note that for $k=1$, $a_{d,1}(F)=\frac{1}{n}\sum_{i_1=\dots =i_d=i}^{n}c_{i,i}^d=1$. Hence, $m_d$ can be written in the following form:
		\begin{equation} \label{md poly of p}
		\begin{split}
		m_d = p+p^2a_{d,2}(F)+p^3a_{d,3}(F)+\cdots+p^da_{d,d}(F)
		\end{split}
		\end{equation}
		where $a_{d,d}(F)$ is of a special interest, and corresponds to the cycle of correlations of all distinct indices:
		\begin{equation} \label{a_dd(F)}
		\begin{split}
		a_{d,d}(F) = \frac{1}{n} \sum_{i_1\neq i_2\neq i_3\neq ..\neq i_d}^{n}c_{i_1,i_2}c_{i_2,i_3}\cdots c_{i_d,i_1} 
		\end{split}
		\end{equation}
		We now turn to consider each of the special cases $d=2,3,4$.
		\\ \textbf{Second moment}:
		According to \eqref{md poly of p} we have
		\begin{equation} \label{m2}
		\begin{split}
		&m_2 =p+p^2a_{2,2}(F)
		\end{split}
		\end{equation}
		where $a_{2,2}(F)$ correspond to cases with $i_1\neq i_2$
		\begin{equation} \label{a22}
		\begin{split}
		&a_{2,2}(F)=\frac{1}{n}\sum_{i_2\neq i_1}^{n}|c_{i_1,i_2}|^2\ge x
		\end{split}
		\end{equation}
		where the inequality is due to the rms Welch bound \eqref{WB x}, and satisfied with equality iff $F$ is a UTF.
		From \eqref{m2} and \eqref{a22}, 
		\begin{equation} \label{m2 bound}
		\begin{split}
		&m_2 \ge  p+p^2x=m^{\rm MANOVA}(\gamma,p,2).
		\end{split}
		\end{equation}
		\\
		\textbf{Third moment}:
		According to \eqref{md poly of p},
		\begin{equation} \label{m3 calc}
		\begin{split}
		m_3 &= p + p^2a_{3,2}(F) + p^3a_{3,3}(F).
		\end{split}
		\end{equation}
		$a_{3,2}(F)$ includes all combinations of 2 distinct values for $i_1,i_2,i_3$:
		\begin{equation} \label{a32}
		\begin{split}
		a_{3,2}(F)&=\frac{1}{n}3\sum_{i_1=i_2}^{n}\sum_{i_3\neq i_1}^{n}c_{i_1,i_1}c_{i_1,i_3}c_{i_3,i_1}\\&=\frac{1}{n}3\sum_{i_3\neq i_1}^{n}|c_{i_1,i_3}|^2=3a_{2,2}(F),
		\end{split}
		\end{equation}
		where we used $c_{i,i}=1$ and \eqref{a22}.
		Since \eqref{m3 calc} holds for every $p$, we can set $p=1$ and use \eqref{a32}, Lemma \ref{lemma1} for $d=3$ to obtain: 
		\begin{equation} \label{m3_p=1}
		\begin{split}
		1 + 3a_{2,2}(F)+a_{3,3}(F)\ge \left(\frac{n}{m}\right)^2=(x+1)^2
		\end{split}
		\end{equation}
		From \eqref{m3 calc}, \eqref{a32} and \eqref{m3_p=1} 
		\begin{equation} \label{m3 bound}
		\begin{split}
		&m_3\ge p + p^23a_{2,2}(F)+p^3((x+1)^2-1-3a_{2,2}(F)) \\&= (p-p^3)+(p^2-p^3)3a_{2,2}(F)+p^3(x+1)^2
		\end{split}
		\end{equation}
		Since $p\le 1$, we have $p^2-p^3\ge 0$, and we can use \eqref{a22} to get a lower bound on the third moment of a unit norm frame:
		\begin{equation} \label{m3 bound2}
		\begin{split}
		&m_3\ge(p-p^3)+(p^2-p^3)3x+p^3(x+1)^2\\&=p + p^2 3x + p^3 (x^2-x) =m^{\rm MANOVA}(\gamma,p,3) 
		\end{split}
		\end{equation}
		and the condition for equality in both \eqref{a22} and \eqref{m3_p=1} is the frame being a UTF.
		\\ \textbf{Fourth moment}:
		According to \eqref{md poly of p},
		\begin{equation} \label{m4 poly of p}
		\begin{split}
		m_4 &= p + p^2a_{4,2}(F) + p^3a_{4,3}(F) + p^4a_{4,4}(F).
		\end{split}
		\end{equation}
		Denote $h(\{i_l\}_{l=1}^4)=c_{i_1,i_2}c_{i_2,i_3}c_{i_3,i_4}c_{i_4,i_1} $.  
		Considering all partitions of $\{i_l\}_{l=1}^4$ into 2 groups (2 distinct values), we get:
		\[
		a_{4,2}=\underbrace{4\frac{1}{n}\sum_{i_2=i_3=i_4\neq i_1}^{n}h}_{a^{(1)}_{4,2}}
		+\underbrace{2\frac{1}{n}\sum_{i_1=i_2\neq i_3=i_4}^{n}h}_{a^{(2)}_{4,2}}
		+\underbrace{\frac{1}{n}\sum_{i_1=i_3\neq i_2=i_4}^{n}h}_{a^{(3)}_{4,2}}
		\]
		where $a^{(1)}_{4,2}$ corresponds to partitions consisting of 3 identical indices a 1 different - $i_1$ or $i_2$ or $i_3$ or $i_4$, $a^{(2)}_{4,2}$ corresponds to partitions consisting of 2 different, non-crossing, pairs of indices - $i_1=i_2,i_3=i_4$ or $i_2=i_3,i_4=i_1$, $a^{(3)}_{4,2}$ corresponds to a partition consisting of 2 different, crossing, pairs of indices - $i_1=i_3,i_2=i_4$.
		We derive now the three components:
		\begin{align} \label{a42 components}	
		&a^{(1)}_{4,2}=4\frac{1}{n}\sum_{i_2\neq i_1}^{n}|c_{i_1,i_2}|^2=4a_{2,2}(F)\\
		&a^{(2)}_{4,2}=2\frac{1}{n}\sum_{i_3\neq i_1}^{n}|c_{i_1,i_3}|^2=2a_{2,2}(F)\\
		\label{a42_3}
		&a^{(3)}_{4,2}=\frac{1}{n}\sum_{i_2\neq i_1}^{n}|c_{i_1,i_2}|^4 
		\end{align}
		We lower bound $a^{(3)}_{4,2}$. By Jensen's inequality:
		\begin{equation} \label{a42_3_bound}
		\begin{split}
		\frac{1}{n(n-1)}\sum_{i_2\neq i_1}^{n}|c_{i_1,i_2}|^4\ge \left(\frac{1}{n(n-1)}\sum_{i_2\neq i_1}^{n}|c_{i_1,i_2}|^2\right)^2
		\end{split}
		\end{equation}
		which is achieved with equality if all absolute correlations are constant, i.e. $F$ is ETF. Hence, from \eqref{a42_3}, \eqref{a42_3_bound}:
		\begin{equation} \label{a42_3 bound2}
		a^{(3)}_{4,2}\ge \frac{1}{n-1}\left(\frac{1}{n}\sum_{i_2\neq i_1}^{n}|c_{i_1,i_2}|^2\right)^2\ge \frac{x^2}{n-1} 
		\end{equation}
		where the second inequality follows from Welch bound \eqref{WB x}.
		Considering all partitions of $\{i_l\}_{l=1}^4$ into 3 groups, i.e. 3 distinct values, we get:
		\[
		a_{4,3}=\underbrace{4\frac{1}{n}\sum_{i_1=i_2\neq i_3\neq i_4}^{n}h}_{a^{(1)}_{4,3}}
		+\underbrace{2\frac{1}{n}\sum_{i_1=i_3\neq i_2\neq i_4}^{n}h}_{a^{(2)}_{4,3}}
		\]
		where $a^{(1)}_{4,3}$ corresponds to partitions consisting of 1 pair of identical indices and 2 different values- $i_1=i_2$ or $i_2=i_3$ or $i_3=i_4$ or $i_4=i_1$, $a^{(2)}_{4,3}$ corresponds to partitions consisting of 1 pair of identical indices and 2 different values- $i_1=i_3$ or $i_2=i_4$.
		We derive now these two components:
		\begin{align} \label{a43 components}	
		&a^{(1)}_{4,3}=4\frac{1}{n}\sum_{i_2\neq i_3\neq i_4}^{n}c_{i_2,i_3}c_{i_3,i_4}c_{i_4,i_1}=4a_{3,3}(F)\\
		&a^{(2)}_{4,3}=2\frac{1}{n}\sum_{i_1\neq i_2\neq i_4}^{n}|c_{i_1,i_2}|^2|c_{i_1,i_4}|^2
		\end{align}
		Denote $C_{i_1}$ as the sum over all absolute correlations between $i_1$ and other frame vectors.
		\begin{equation} \label{C_i}
		\begin{split}
		C_{i_1} = \sum_{i_2\neq i_1}^{n}|c_{i_1,i_2}|^2
		\end{split}
		\end{equation}
		We derive a lower bound on the sum $a^{(3)}_{4,2}+\frac{1}{2}a^{(2)}_{4,3}$
		\begin{equation} \label{b2_bound}
		\begin{split}
		&\frac{1}{2}a^{(2)}_{4,3}
		=\frac{1}{n}\sum_{i_1}^{n}\sum_{i_2\neq i_1}^{n}|c_{i_1,i_2}|^2\sum_{i_4\neq i_2,i_1}^{n}|c_{i_1,i_4}|^2\\&=\frac{1}{n}\sum_{i_1}^{n}\sum_{i_2\neq i_1}^{n}|c_{i_1,i_2}|^2\left[C_{i_1}-|c_{i_1,i_2}|^2\right]\\&=\frac{1}{n}\sum_{i_1}^{n}C_{i_1}\sum_{i_2\neq i_1}^{n}|c_{i_1,i_2}|^2-\frac{1}{n}\sum_{i_1}^{n}\sum_{i_2\neq i_1}^{n}|c_{i_1,i_2}|^4\\&=\frac{1}{n}\sum_{i_1}^{n}C_{i_1}^2-a^{(3)}_{4,2} \,\,\,\,\,\,\,\,\,   \Rightarrow
		\end{split}
		\end{equation}
		\begin{equation} \label{b1+b2}
		\begin{split}
		a^{(3)}_{4,2}+\frac{1}{2}a^{(2)}_{4,3} &= \frac{1}{n}\sum_{i_1}^{n}C_{i_1}^2 \ge \left(\frac{1}{n}\sum_{i_1}^{n}C_{i_1}\right)^2
		\ge x^2
		\end{split}
		\end{equation}
		where the first inequality is again due to Jensen and is achieved with equality if $C_i$ are equal for all $i$, and the second inequality is the Welch bound \eqref{WB x}.
		Combining all terms we have
		\begin{equation} \label{m4 total}
		\begin{split}
		&m_4=p+p^2(6a_{2,2}+a^{(3)}_{4,2})+p^3(4a_{3,3}+a^{(2)}_{4,3})+p^4a_{4,4}
		\end{split}
		\end{equation}
		Now we repeat the procedure from the bound on $m_3$ with sequential substitution of all bounds and gathering of similar terms.
		We set $p=1$ in \eqref{m4 total} and use Lemma \ref{lemma1}:
		\begin{equation} \label{a44 bound}
		\begin{split}
		&a_{4,4}\ge (x+1)^3-1-6a_{2,2}-4a_{3,3}-a^{(3)}_{4,2}-a^{(2)}_{4,3}
		\end{split}
		\end{equation}
		Substituting \eqref{a44 bound} into \eqref{m4 total} we get:
		\begin{equation} \label{m4 bound2}
		\begin{split}
		&m_4\ge p-p^4+p^4(x+1)^3+(p^2-p^4)6a_{2,2}\\&+(p^2-p^4)a^{(3)}_{4,2}+(p^3-p^4)a^{(2)}_{4,3}+(p^3-p^4)4a_{3,3}.
		\nonumber
		\end{split}
		\end{equation}
		As $(p^3-p^4)\ge 0$, we can substitute \eqref{m3_p=1} and get:
		\begin{equation} \label{m4 bound3}
		\begin{split}
		&m_4\ge p-p^4+p^4(x+1)^3+(p^3-p^4)\left(4(x+1)^2-4\right)\\&+p^2(1-p)6a_{2,2}+(p^2-p^4)a^{(3)}_{4,2}+(p^3-p^4)a^{(2)}_{4,3}
		\end{split}
		\end{equation}
		and now we use the bound on $a_{2,2}$ \eqref{a22}.
		The last two terms can be reordered to become a function of $a^{(3)}_{4,2}$ and $a^{(3)}_{4,2}+\frac{1}{2}a^{(2)}_{4,3}$ for which we have bounds
		\begin{equation} \label{m4 bound4}
		\begin{split}
		&(p^2-p^4)a^{(3)}_{4,2}+(p^3-p^4)a^{(2)}_{4,3} \\&=(p^3-p^4)2(a^{(3)}_{4,2}+\frac{1}{2}a^{(2)}_{4,3})+p^2(1-p)^2a^{(3)}_{4,2}
		\end{split}
		\end{equation}
		So now we can apply \eqref{a42_3 bound2} and \eqref{b1+b2}
		\begin{equation} \label{m4 bound5}
		\begin{split}
		&m_4\ge p+p^2(6x+\frac{1}{n-1}x^2)+p^3(6x^2-4x-2\frac{1}{n-1}x^2)\\&+p^4(x^3-3x^2+x+\frac{1}{n-1}x^2)\\&=m^{\rm MANOVA}(\gamma,p,4)+p^2(1-p)^2\frac{x^2}{n-1}
		\end{split}
		\end{equation}
		with equality iff $F$ is ETF. 
		Note that the asymptotic lower bound $\lim\limits_{n\to \infty}m_4\ge m^{\rm MANOVA}(\gamma,p,4)$ holds with equality under the weaker condition that $F$ is a UTF and $C_{i_1} = \sum_{i_2\neq i_1}^{n}|c_{i_1,i_2}|^2$ is equal for all $i$ and $\frac{1}{n}\sum_{i_2\neq i_1}^{n}|c_{i_1,i_2}|^4\to 0$ as $n\to \infty$, i.e. ETF is sufficient but not necessary. 
 $\Box$
	\section{Discussion and Future Work}\label{discussion}
	We are currently working to extend our results to higher order moments $d$. For example, we already found the asymptotic form for the moments of orders $d=5,6$ of subsets of ETF, and verified that they agree with that of MANOVA. Furthermore, we developed a recursive procedure which allows to continue to higher order moments. A complete computation of all the moments will provide formal validation for some of the empirical results reported in \cite{haikin2017random}, and specifically, that the singular values of random subsets of an ETF asymptotically follow Wachter's MANOVA distribution.
	The performance of analog coding \cite{haikin2016analog,ITA17} relies on yet another figure of merit of frame subsets, namely the harmonic-to-arithmetic means ratio of the singular values of the subframe covariance matrix. In our standing notation, this quantity is equivalent to the first inverse moment $d=-1$.  Extension of the Erasure Welch Bounds to higher order moments and $d=-1$ would establish that an ETF is the most robust
	frame under inversion of subsets.
	A more complete description of these extensions will 
	appear elsewhere.
	\section*{Acknowledgment}
	We would like to thank Ofer Zeitouni for proposing the moment method for analyzing subsets of ETF. We also thank Benny Zaidel for a helpful discussion. This work has been partially supported by the Israeli Science Foundation grants no. 1523/16, 676/15.

	\bibliographystyle{IEEEtran}
	\bibliography{bibliofile}

\begin{thebibliography}{10}
\providecommand{\url}[1]{#1}
\csname url@samestyle\endcsname
\providecommand{\newblock}{\relax}
\providecommand{\bibinfo}[2]{#2}
\providecommand{\BIBentrySTDinterwordspacing}{\spaceskip=0pt\relax}
\providecommand{\BIBentryALTinterwordstretchfactor}{4}
\providecommand{\BIBentryALTinterwordspacing}{\spaceskip=\fontdimen2\font plus
\BIBentryALTinterwordstretchfactor\fontdimen3\font minus
  \fontdimen4\font\relax}
\providecommand{\BIBforeignlanguage}[2]{{%
\expandafter\ifx\csname l@#1\endcsname\relax
\typeout{** WARNING: IEEEtran.bst: No hyphenation pattern has been}%
\typeout{** loaded for the language `#1'. Using the pattern for}%
\typeout{** the default language instead.}%
\else
\language=\csname l@#1\endcsname
\fi
#2}}
\providecommand{\BIBdecl}{\relax}
\BIBdecl

\bibitem{rupf1994optimum}
M.~Rupf and J.~L. Massey, ``Optimum sequence multisets for synchronous
  code-division multiple-access channels,'' \emph{IEEE Transactions on
  Information Theory}, vol.~40, no.~4, pp. 1261--1266, 1994.

\bibitem{candes2008restricted}
E.~J. Candes, ``The restricted isometry property and its implications for
  compressed sensing,'' \emph{Comptes Rendus Mathematique}, vol. 346, no. 9-10,
  pp. 589--592, 2008.

\bibitem{calderbank2010construction}
R.~Calderbank, S.~Howard, and S.~Jafarpour, ``Construction of a large class of
  deterministic sensing matrices that satisfy a statistical isometry
  property,'' \emph{IEEE journal of selected topics in signal processing},
  vol.~4, no.~2, pp. 358--374, 2010.

\bibitem{haikin2016analog}
M.~Haikin and R.~Zamir, ``Analog coding of a source with erasures,'' in
  \emph{IEEE International Symposium on Information Theory (ISIT)}.\hskip 1em
  plus 0.5em minus 0.4em\relax IEEE, Barcelona, 2016, pp. 2074--2078.

\bibitem{wolf1983redundancy}
J.~Wolf, ``Redundancy, the discrete fourier transform, and impulse noise
  cancellation,'' \emph{IEEE Transactions on Communications}, vol.~31, no.~3,
  pp. 458--461, 1983.

\bibitem{ITA17}
M.~Haikin, R.~Zamir, and M.~Gavish, ``Analog coding and good frames,''
  \emph{Information Theory and Application (ITA), workshop}, San Diego, 2017.

\bibitem{mashiach2013sampling}
A.~Mashiach, J.~Ostergaard, and R.~Zamir, ``Sampling versus random binning for
  multiple descriptions of a bandlimited source,'' in \emph{IEEE Information
  Theory Workshop (ITW)}.\hskip 1em plus 0.5em minus 0.4em\relax IEEE, Seville,
  2013, pp. 1--5.

\bibitem{welch1974lower}
L.~Welch, ``Lower bounds on the maximum cross correlation of signals
  (corresp.),'' \emph{IEEE Transactions on Information theory}, vol.~20, no.~3,
  pp. 397--399, 1974.

\bibitem{wachter1980limiting}
K.~W. Wachter, ``The limiting empirical measure of multiple discriminant
  ratios,'' \emph{The Annals of Statistics}, pp. 937--957, 1980.

\bibitem{haikin2017random}
M.~Haikin, R.~Zamir, and M.~Gavish, ``Random subsets of structured
  deterministic frames have {MANOVA} spectra,'' \emph{Proceedings of the
  National Academy of Sciences}, p. 201700203, 2017.

\bibitem{dubbs2015infinite}
A.~Dubbs and A.~Edelman, ``Infinite random matrix theory, tridiagonal bordered
  toeplitz matrices, and the moment problem,'' \emph{Linear Algebra and its
  Applications}, vol. 467, pp. 188--201, 2015.

\bibitem{tulino2004random}
A.~M. Tulino, S.~Verd{\'u} \emph{et~al.}, ``Random matrix theory and wireless
  communications,'' \emph{Foundations and Trends{\textregistered} in
  Communications and Information Theory}, vol.~1, no.~1, pp. 1--182, 2004.

\end{thebibliography}

	
	
	

\end{document}